\title{\LARGE \bf
Lipschitz Continuity of Signal Temporal Logic Robustness Measures: \\
Synthesizing Control Barrier Functions from One Expert Demonstration
}
\author{Prithvi Akella$^{*}$, Apurva Badithela$^{*}$, Richard Murray, and Aaron D. Ames$^{1}$
\thanks{$^*$ The authors contributed equally. Corresponding authors: P. Akella, A. Badithela, \texttt{\{pakella,apurva\}@caltech.edu}. }
\thanks{We acknowledge funding from AFOSR Test and Evaluation Program, grant FA9550-19-1-0302.}
\thanks{$^{1}$All authors are with the California Institute of Technology}
}
\begin{document}

\maketitle
\thispagestyle{empty}
\pagestyle{empty}

\begin{abstract}
Control Barrier Functions (CBFs) allow for efficient synthesis of controllers to maintain desired invariant properties of safety-critical systems. However, the problem of identifying a CBF remains an open question. As such, this paper provides a constructive method for control barrier function synthesis around one expert demonstration that realizes a desired system specification formalized in Signal Temporal Logic (STL). First, we prove that all STL specifications have Lipschitz-continuous robustness measures. Second, we leverage this Lipschitz continuity to synthesize a time-varying control barrier function. By filtering control inputs to maintain the positivity of this function, we ensure that the system trajectory satisfies the desired STL specification. Finally, we demonstrate the effectiveness of our approach on the Robotarium.
\end{abstract}

\section{INTRODUCTION}
Control barrier functions (CBFs) have emerged as a pre-eminent tool for controller synthesis for safety-critical systems~\cite{ames2016control,hsu2015control,xiong2021slip,srinivasan2018control,zeng2021safety,choi2020reinforcement}, due to their facilitation of robust, input-to-state-safe control~\cite{kolathaya2018input,alan2021safe,tezuka2020time}, multi-agent control~\cite{cai2021safe,tan2021distributed}, and learning-based control~\cite{marvi2021safe,ma2021model}, among other applications. In particular, CBFs are used to synthesize controllers that maintain forward invariance of a subset \(\mathcal{C}\) of the system state space. Despite their tremendous success, synthesizing CBFs remains an open question with tremendous interest~\cite{jagtap2020control,robey2020learning,srinivasan2020synthesis,lindemann2018control,lindemann2020barrier,huang2020multi}. Synthesis of CBFs is difficult, in large part because of two requirements these functions must satisfy: i)  unit relative degree with respect to the controller, which is addressed in part via exponential barrier functions~\cite{nguyen2016exponential,azimi2021exponential}, and ii) the differential inequality contingent on system dynamics that must hold over the entire state space~\cite{ames2016control}. 

These conditions can be expressed as constraints on optimization problems over function spaces, leading to learning-based approaches to identify these functions against provided models~\cite{jagtap2020control,robey2020learning,srinivasan2020synthesis,dawson2022safe}. However, these approaches tend to require a large amount of system data and suffer from a lack of generality to adapt to circumstances not reflected in the underlying data. To address these issues, recent work aims to construct these functions from first principles based on the expression of system behaviors in signal temporal logic (STL)~\cite{lindemann2018control,lindemann2020barrier,huang2020multi,donze2010robust}. As such, they provide constructive approaches to transition between desired system behaviors and a quantifiable encoding of those behaviors through barrier functions.  However, the synthesis of barrier functions developed in these prior works is not easily adaptable to changing scenarios --- an issue we aim to resolve.
\begin{figure}[t]
    \centering
    \includegraphics[width = \columnwidth]{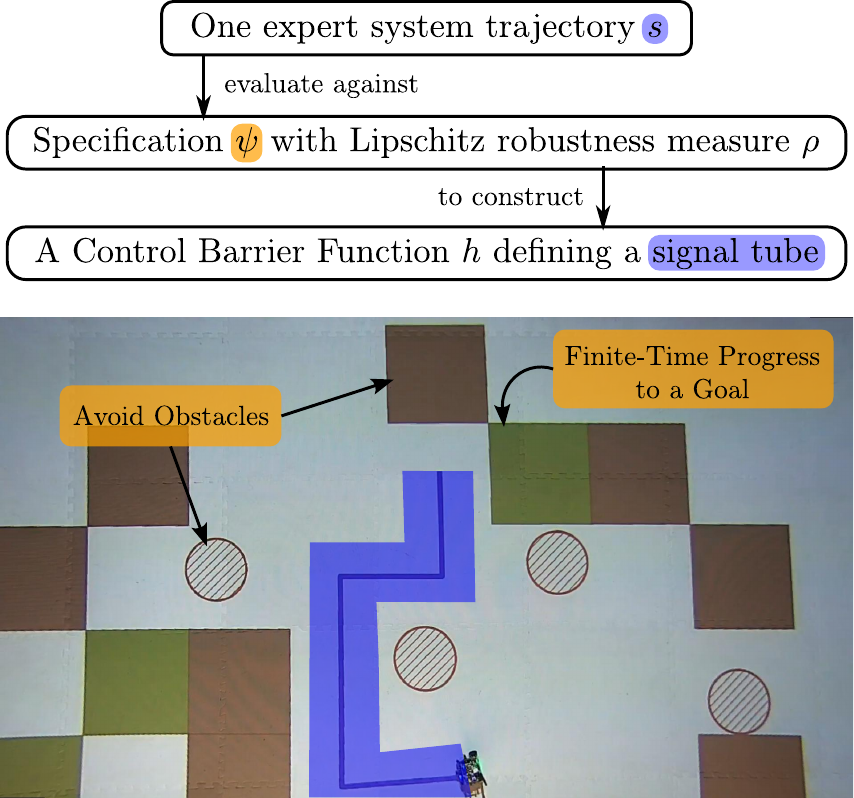} \vspace{-0.2 in}
    \caption{An overview of our proposed approach.  By evaluating one expert trajectory against the Lipschitz-continuous robustness measure $\rho_{\psi}$ for an STL specification $\psi$, we construct a time-varying CBF defining a signal tube of satisfying trajectories.  By filtering control inputs using this barrier function, we guarantee specification satisfaction.  The expert signal and tube are shown in blue and the specification is shown in gold highlights.} \vspace{-0.2 in}
    \label{fig:overview}
\end{figure}

\newidea{Our Contribution:} Our work extends prior works~\cite{lindemann2018control,lindemann2020barrier,huang2020multi} by constructively synthesizing control barrier functions for a larger fragment of signal temporal logic specifications. Furthermore, we bridge the gap between the first principles approach of~\cite{lindemann2018control,lindemann2020barrier} and learning-based approaches by constructing our CBF using one expert demonstration. Thus, our method is amenable to rapid synthesis of CBFs online and can adapt to changing environments while maintaining desired safe system behavior throughout. We demonstrate our synthesis procedure extensively in simulation and present a hardware demonstration on the Georgia Tech Robotarium~\cite{wilson2020robotarium}. The synthesized CBFs succeed in safely steering our system despite randomized test cases produced via the methods described in~\cite{akella2022sample,akella2022scenario}. These successes indicate the utility of our approach insofar as we are rapidly able to synthesize control barrier functions online for complex signal temporal logic specifications and realize the desired behavior reliably on hardware.

\newidea{Structure:} Section~\ref{sec:barriers_and_stl} reviews control barrier functions and signal temporal logic and states the problem under study --- synthesis of CBFs to satisfy an STL formula.  Section~\ref{sec:lipschitz} details our efforts to prove that all STL specifications have Lipschitz continuous robustness measures. Section~\ref{sec:cbf_synt} leverages this Lipschitz continuity to construct time-varying CBFs provided the existence of one expert demonstration satisfying the specification of interest.  Finally, Section~\ref{sec:examples} details the implementation of our approach in both simulation and hardware on the Georgia Tech Robotarium~\cite{wilson2020robotarium}.

\section{Preliminaries}
\label{sec:barriers_and_stl}
Before formally stating the problem, we provide a brief background on control barrier functions and signal temporal logic as they are central topics for the remainder of this paper.

\spacing
\newidea{Control Barrier Functions:} Inspired by barrier methods in optimization (see Chapter 3 of \cite{forsgren2002interior}), control barrier functions are a tool used to ensure safety in safety-critical systems that are control-affine, \textit{i.e.} with state $x$ and control input $u$:
\begin{equation}
    \label{eq:base_system}
    \dot x = f(x) + g(x) u, \quad x \in \mathcal{X} \subseteq \mathbb{R}^n,~u\in\mathcal{U}\subseteq \mathbb{R}^m,
\end{equation}
Here, $\mathcal{X}$ is the state space, and $\mathcal{U}$ is the input space.  
Time-varying control barrier functions (TVCBF) are defined against an inequality using class-$\mathcal{K}_e$ functions $\alpha: \mathbb{R} \to \mathbb{R}$.  These functions are such that $\alpha(0) = 0$ and $r \geq s \iff \alpha(r) \geq \alpha(s)$. Then, a time-varying control barrier function is defined as follows.

\begin{definition}
    \label{def:time_varying_cbf}
    A \textit{time-varying control barrier function} \({h:\mathbb{R}^n \times \mathbb{R}_{\geq 0} \to \mathbb{R}}\) is a differentiable function satisfying the following inequality $\forall~(x,t) \in \mathcal{X} \times \mathbb{R}_{\geq 0}$ and for some class-$\mathcal{K}_e$ function $\alpha$:
    \begin{equation}
        \label{eq:tv_cbf_criteria}
        \sup_{u \in \mathbb{R}^m}~\left[\frac{\partial h}{\partial x}\left( f(x) + g(x)u\right) + \frac{\partial h}{\partial t}\right] \geq -\alpha(h(x,t)).
\end{equation}
\end{definition}
\noindent To better explain the utility of these functions, consider the $0$-superlevel set of a time-varying control barrier function $h$:
\begin{equation}
    \label{eq:zerosuplevel}
\mathcal{C} = \left\{(x,t) \in \mathcal{X} \times \mathbb{R}_{\geq 0}~|~h(x,t) \geq 0\right\},
\end{equation}
and the set of control inputs satisfying the inequality~\eqref{eq:tv_cbf_criteria}:
\begin{equation}
    \label{eq:valid_input_set}
    K_{cbf}(x,t) = \left\{u \in \mathcal{U}~\big|~\dot h(x,u,t) \geq -\alpha(h(x,t))\right\}.
\end{equation}
Finally, let $\phi^U_t(x_0)$ denote the flow of the control affine system~\eqref{eq:base_system} from the initial condition $x_0 \in \mathcal{X}$ when steered by a controller $U:\mathcal{X} \times \mathbb{R}_{\geq 0} \to \mathcal{U}$, \textit{i.e.} $\phi^U_0(x_0) = x_0$ and,
\begin{equation}
    \label{eq:solution}
    \dot \phi^U_t(x_0) = f\left(\phi^U_t(x_0) \right) + g\left(\phi^U_t(x_0)\right)U\left(\phi^U_t(x_0),t\right).
\end{equation}
Then the utility of these functions is expressed formally in the following theorem regarding forward invariance of the $0$-superlevel set $\mathcal{C}$:
\begin{theorem}\label{thm:cbf_condition}
    Let $I(x_0) = [0,t_{\max}]$ be the interval of existence of solutions $\phi^U_t(x_0)$, let $K_{cbf}(x,t)$ be as defined in~\eqref{eq:valid_input_set}, and let $\mathcal{C}$ be as per~\eqref{eq:zerosuplevel}.  If $\forall \, t \in I(x_0),~U\left(\phi^U_t(x_0),t\right) \in K_{cbf}\left(\phi^U_t(x_0),t\right)$, then $\forall \,t \in I(x_0),~(\phi^U_t(x_0),t) \in \mathcal{C}$.
\end{theorem}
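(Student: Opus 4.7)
The plan is to reduce the forward-invariance claim to a scalar differential-inequality argument along the closed-loop trajectory. Define $y:I(x_0)\to\mathbb{R}$ by $y(t) = h(\phi^U_t(x_0),t)$. Since $(\phi^U_t(x_0),t)\in\mathcal{C}$ is by~\eqref{eq:zerosuplevel} equivalent to $y(t)\geq 0$, it suffices to prove $y(t)\geq 0$ on $I(x_0)$. Note that at $t=0$ this requires $h(x_0,0)\geq 0$ as an initial condition, so I would either verify that this is implicit in the hypotheses of the theorem or state it explicitly.

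Next I would differentiate $y$ along the trajectory. By the chain rule together with~\eqref{eq:solution},
\[
\dot y(t) = \frac{\partial h}{\partial x}\bigl(\phi^U_t(x_0),t\bigr)\bigl[f(\phi^U_t(x_0))+g(\phi^U_t(x_0))U(\phi^U_t(x_0),t)\bigr] + \frac{\partial h}{\partial t}\bigl(\phi^U_t(x_0),t\bigr),
\]
which is exactly $\dot h(\phi^U_t(x_0),U(\phi^U_t(x_0),t),t)$. The hypothesis $U(\phi^U_t(x_0),t)\in K_{cbf}(\phi^U_t(x_0),t)$ combined with~\eqref{eq:valid_input_set} then yields the scalar differential inequality
\[
\dot y(t) \;\geq\; -\alpha(y(t)) \qquad \text{for all } t\in I(x_0).
\]

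The remaining step is a standard comparison argument. I would proceed by contradiction: suppose there exists $t_1\in I(x_0)$ with $y(t_1)<0$, and define $t_0 = \sup\{t\in[0,t_1]:y(t)\geq 0\}$, which is well-defined by continuity of $y$ and by $y(0)\geq 0$. Then $y(t_0)=0$ and $y(t)<0$ on $(t_0,t_1]$. Because $\alpha$ is class-$\mathcal{K}_e$, it is monotonically increasing with $\alpha(0)=0$, so $\alpha(y(t))<0$ and hence $\dot y(t)\geq -\alpha(y(t))>0$ on $(t_0,t_1]$. Integrating gives $y(t_1)>y(t_0)=0$, contradicting $y(t_1)<0$.

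The principal obstacle I anticipate is not the comparison step itself, which is routine, but making sure that the chain-rule computation is legitimate: $h$ must be differentiable in both arguments (assumed in Definition~\ref{def:time_varying_cbf}), $U$ and the trajectory must be regular enough that $\phi^U_t$ solves~\eqref{eq:solution} in at least a Carath\'eodory sense, and the initial condition $h(x_0,0)\geq 0$ must be available. Once these mild regularity hypotheses are in place, the argument is self-contained and applies uniformly over $I(x_0)$.
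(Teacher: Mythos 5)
The paper does not prove Theorem~\ref{thm:cbf_condition}; it is stated as background, being the standard forward-invariance result for (time-varying) control barrier functions from the literature the paper cites, so there is no in-paper argument to compare against. Your proof is the standard one and is correct. The chain-rule reduction to the scalar inequality $\dot y(t) \geq -\alpha(y(t))$ follows immediately from the hypothesis $U(\phi^U_t(x_0),t)\in K_{cbf}(\phi^U_t(x_0),t)$ together with~\eqref{eq:valid_input_set}, and your first-crossing contradiction is sound: with the paper's definition of class-$\mathcal{K}_e$ (strictly increasing with $\alpha(0)=0$), $y<0$ on $(t_0,t_1]$ forces $\dot y>0$ there, hence $y(t_1)>y(t_0)=0$. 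This route even has a small advantage over the usual comparison-lemma proof (which compares against solutions of $\dot w=-\alpha(w)$ and therefore needs local Lipschitz continuity of $\alpha$ for uniqueness): your argument needs only monotonicity. You are also right to flag the initial condition. As stated, the theorem omits the hypothesis $h(x_0,0)\geq 0$, i.e.\ $(x_0,0)\in\mathcal{C}$, and without it the conclusion is false: for $\alpha(r)=\epsilon r$ the function $y(t)=-e^{-\epsilon t}$ satisfies $\dot y=-\alpha(y)$ and remains negative for all time. Your proof correctly invokes $y(0)\geq 0$ when constructing $t_0$, so that hypothesis should be made explicit; the regularity caveats you list at the end (differentiability of $h$, the trajectory solving~\eqref{eq:solution} at least in the Carath\'eodory sense) are the right ones and are consistent with Definition~\ref{def:time_varying_cbf}.
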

\spacing
\newidea{Signal Temporal Logic:} 
Signal Temporal Logic (STL) is a logic that allows for specifying trace properties of dense-time, real-valued signals~\cite{donze2010robust}. STL formulas are defined recursively as follows,
\begin{equation}
    \label{eq:spec}
    \psi ::= \mu~|~\neg \psi~|~\psi_1 \lor \psi_2 ~|~ \psi_1 \wedge \psi_2 ~|~
    \psi_1 \until_{[a,b]} \psi_2,
\end{equation}
where $a,b\in\mathbb{R}_{\geq 0} \cup \{\infty\}$, $a\leq b$, and $\mu$ denotes an atomic proposition which is evaluated over states $x\in\mathbb{R}^n$ and returns a Boolean value if a function $h_{\mu}: \mathbb{R}^n \to \mathbb{R}$ is positive at $x$:
\begin{equation}
    \label{eq:propos_def}
    \mu(x) = \true \iff h_{\mu}(x)\geq 0.
\end{equation}

\noindent We denote signals $s: \mathbb{R}_{\geq 0} \to \mathbb{R}^n$ and the space of all signals $\signalspace = \{s~|~s:\mathbb{R}_{\geq 0} \to \mathbb{R}^n\}$.  Then, we denote that a signal $s$ satisfies $\psi$ at time $t$ via the notation $(s,t) \models \psi$.  The satisfaction operator $\models$ is defined recursively as follows:
\begin{equation}
    \begin{aligned}
        & (s,t) \models \mu & \hspace{-0.1 in} \iff & \mu(s(t)) = \true, \\
        & (s,t) \models \neg \psi & \hspace{-0.1 in}  \iff & (s,t) \not \models \psi, \\
        & (s,t) \models \psi_1 \lor \psi_2 & \hspace{-0.1 in}  \iff & (s,t) \models \psi_1 \lor (s,t) \models \psi_2, \\
        & (s,t) \models \psi_1 \wedge \psi_2 & \hspace{-0.1 in}  \iff & (s,t) \models \psi_1 \wedge (s,t) \models \psi_2, \\
        & (s,t) \models \psi_1 \until_{[a,b]} \psi_2 & \hspace{-0.1 in}  \iff & \exists~t' \in [t+a,t+b] \suchthat \\
        & & & \left((s,t') \models \psi_2\right)~\wedge \\
        & & & \left(\forall~t''\in[t+a,t']~(s,t'') \models \psi_1\right).
    \end{aligned}
\end{equation}
Furthermore, every STL specification $\psi$ comes equipped with a robustness measure $\rho$ that evaluates signals $s \in \signalspace$.  If for some time $t \in \mathbb{R}_{\geq 0}$, $\rho(s,t) \geq 0$, then the signal $s$ satisfies the specification $\psi$~\cite{baier2008principles,donze2010robust,fainekos2009robustness, maler2004monitoring}.
\begin{definition}
\label{def:robustness}
A function $\rho_{\psi}: \signalspace \times \mathbb{R}_+ \to \mathbb{R}$ is a \textit{robustness measure} for an STL specification $\psi$ if it satisfies the following equivalency: $\rho_{\psi}(s,t) \geq 0 \iff (s,t) \models \psi$.
\end{definition}
\begin{example}
Let $\psi = \neg(\true \until_{[0,2]} (|s(t)| > 2))$, then any real-valued signal $s:\mathbb{R}_{\geq 0} \to \mathbb{R}$ satisfies $\psi$ at time $t$, \textit{i.e.} $(s,t) \models \psi$ if $\forall~t' \in [t,t+2],~|s(t')| \leq 2$.  The corresponding robustness measure $\rho_{\psi}(s,t) = \min_{t' \in [t,t+2]}~2-|s(t')|$.
\end{example}
\noindent While robustness measures defined according to Definition~\ref{def:robustness} align with the proposition definition in equation~\eqref{eq:propos_def} and prior works~\cite{lindemann2018control,lindemann2020barrier}, it is not the only way of defining such a measure, \textit{e.g.} see Definition 3 of~\cite{donze2010robust} or Section 2.3 in~\cite{fainekos2009robustness}. 

\newidea{Formal Problem Statement:} Our goal is to synthesize a time-varying control barrier function as per Definition~\ref{def:time_varying_cbf} whose continued positivity ensures system satisfaction of an STL specification as per equation~\eqref{eq:spec}. Denoting closed loop trajectories as per~\eqref{eq:solution}, our formal problem statement follows.
\begin{problem}
    \label{prob}
    Let $\psi$ be an STL specification as per~\eqref{eq:spec}.  For a nonlinear control-affine system~\eqref{eq:base_system}, construct a time-varying control barrier function $h$ according to Definition~\ref{def:time_varying_cbf} such that for some time $T \in \mathbb{R}_{\geq 0}$ and signal $z \in \signalspace$:
    \begin{equation}
        \label{eq:desired_condition}
        h(z(t), t) \geq 0,\,~\forall~t \in [0,T] \implies z \models \psi.
    \end{equation}
\end{problem}

\spacing
\newidea{Key Idea:} To address this problem, we first prove in Section~\ref{sec:lipschitz}, that every STL specification of the form in~\eqref{eq:spec} has a Lipschitz-continuous robustness measure $\rho$. That is, for some fixed time $t$ and any two signals $s,z \in \signalspace$, there exists a constant $L \geq 0$ such that $|\rho(s,t) - \rho(z,t)| \leq L \|s-z\|$.  Here, the (semi)norm $\|\cdot\|$ is defined on the $\signalspace$ and will be formally defined in Definition~\ref{def:signal_norm}.
Second, we collect one expert signal $s$ that satisfies our specification $\psi$ and define a time-varying CBF $h$ whose $0$-superlevel corresponds to a signal tube around the expert trajectory $s$. Then, the invariance of this $0$-superlevel set corresponds to the state trajectory remaining within this tube of signals that satisfy the desired specification. This step is detailed in Section~\ref{sec:cbf_synt}.

\section{Lipschitz Robustness Measures}
\label{sec:lipschitz}
The lemmas and theorems in this section will follow an inductive argument to prove that all specifications $\psi$~\eqref{eq:spec} have Lipschitz-continuous robustness measures.  Our first lemma will prove the base case --- that all STL atomic propositions $\mu$ as per~\eqref{eq:propos_def} have Lipschitz robustness measures $h_{\mu}$.
\begin{lemma}
\label{lem:predicate}
Let $\mu$ be an STL proposition as defined in~\eqref{eq:propos_def}.  There exists a Lipschitz continuous function $h_{\mu}:\mathbb{R}^n \to \mathbb{R}$ that satisfies the following two conditions:
\begin{itemize}
    \item $\mu(x) = \true \iff h_{\mu}(x) \geq 0,$
    \item $|h_{\mu}(x) - h_{\mu}(z)| \leq L \|x-z\|,~x,z\in\mathcal{X}$, for some $L \geq 0$.
\end{itemize}
\end{lemma}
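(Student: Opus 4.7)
The plan is to construct, for each STL proposition $\mu$, an explicit Lipschitz witness $h_\mu$ from the signed distance function of the underlying satisfaction set. The key observation is that the specific functional form of the given $h_\mu$ in~\eqref{eq:propos_def} is immaterial for STL semantics: only the set $S_\mu = \{x \in \mathbb{R}^n \mid \mu(x) = \true\}$ actually enters the satisfaction relation, so we are free to replace $h_\mu$ by any function sharing the same sign pattern. Under the mild standing assumption that the original $h_\mu$ is upper semi-continuous (so that $S_\mu$ is closed), the signed distance construction below furnishes a $2$-Lipschitz witness.

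First, I would set
\[
\tilde h_\mu(x) \,=\, d(x, S_\mu^c) - d(x, S_\mu), \qquad d(x,A) \,=\, \inf_{y \in A}\|x-y\|,
\]
handling the degenerate cases $S_\mu = \mathbb{R}^n$ and $S_\mu = \emptyset$ separately by taking $\tilde h_\mu \equiv 1$ and $\tilde h_\mu \equiv -1$, respectively. Second, I would verify the sign equivalence. If $x \in S_\mu$ then $d(x,S_\mu) = 0$ and $\tilde h_\mu(x) = d(x, S_\mu^c) \geq 0$. If $x \notin S_\mu$, closedness of $S_\mu$ forces $x$ to lie in the interior of $S_\mu^c$, whence $d(x,S_\mu^c) = 0$ and $d(x,S_\mu) > 0$, giving $\tilde h_\mu(x) < 0$. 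Third, the Lipschitz bound follows from the standard $1$-Lipschitz property of the point-to-set distance $x \mapsto d(x,A)$ combined with the triangle inequality, yielding $|\tilde h_\mu(x) - \tilde h_\mu(z)| \leq 2\|x - z\|$, so $L = 2$ works uniformly in $\mu$.

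The main obstacle I anticipate is purely the role of closedness of $S_\mu$: without it, a point $x \notin S_\mu$ on the boundary of $S_\mu$ would have $d(x,S_\mu) = 0$, forcing $\tilde h_\mu(x) = 0 \geq 0$ and breaking the sign equivalence. This imposes a regularity requirement on the admissible atomic propositions, but it is consistent with the standard STL practice of defining propositions via continuous, typically smooth, inequalities. Once this regularity is acknowledged, the remainder of the argument is essentially bookkeeping, and the uniform constant $L = 2$ it yields will be convenient for the induction on formula structure carried out in the subsequent lemmas.
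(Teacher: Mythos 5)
Your proof is correct and takes essentially the same approach as the paper's: replace the given $h_\mu$ by a distance function to the satisfaction set $\llbracket \mu \rrbracket = \{x \mid \mu(x) = \true\}$. The paper uses the one-sided choice $h_\mu(x) = -\min_{y \in \llbracket \mu \rrbracket}\|x-y\|$, which is $1$-Lipschitz, while you use the full signed distance with $L=2$; both constructions rely on $\llbracket \mu \rrbracket$ being closed for the sign equivalence, a point you make explicit and the paper leaves implicit.
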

\begin{proof}
First, for each proposition $\mu$, we define $\llbracket \mu \rrbracket$ as the set where $\mu$ holds and $\ell_{\mu}$ as the $2$-norm distance to that set.
\begin{equation}
    \llbracket \mu \rrbracket = \{x \in \mathcal{X}~|~\mu(x) = \true \},~\ell_{\mu}(x) = \min_{y \in \llbracket \mu \rrbracket}~\|x-y\|.
\end{equation}
Then, $h_{\mu}(x) = -\ell_{\mu}(x)$ satisfies both required conditions.
\end{proof}

\noindent The next two lemmas act as inductive arguments beyond the base case.  The first will prove that if STL specifications $\psi_1,\psi_2$ have Lipschitz robustness measures, then logical combinations of these specifications have a Lipschitz robustness measure as well.  To facilitate the statement of this and future lemmas, a (semi)-norm over $\signalspace$ is defined as follows:
\begin{definition}
    \label{def:signal_norm}
    Let $s$ be a signal in the signal space $\signalspace$.  We define a \textit{signal (semi)norm} $\|\cdot\|_{[a,b]}$ for times $a,b \in \mathbb{R}_{\geq 0} \cup \{\infty\}$ as follows, with $\|\cdot\|$ the $2$-norm on $\mathbb{R}^n$:
    \begin{equation}
        \|s\|_{[a,b]} = \min_{t \in [a,b]}\|s\|.
    \end{equation}
\end{definition}
\begin{lemma}
    \label{lem:logical_conjunctions}
    Let $\psi_1,\psi_2$ be STL specifications as per~\eqref{eq:spec} such that their corresponding robustness measures $\rho_{\psi_1},\rho_{\psi_2}$ are Lipschitz with constants $L_{\psi_1}, L_{\psi_2}$ and with respect to (perhaps) different (semi)-norms $\|\cdot\|_{[a_1,b_1]},\|\cdot\|_{[a_2,b_2]}$.  Then, (with $``|"$ demarcating different definitions) $\psi_3 = \neg \psi_1 ~|~ \psi_1 \wedge \psi_2~|~\psi_1 \lor \psi_2$ all have Lipschitz robustness measures $\rho_{\psi_3}$.
\end{lemma}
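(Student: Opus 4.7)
The plan is to define the standard STL robustness semantics for each of the three Boolean connectives and then verify Lipschitzness of each, using the hypothesized Lipschitz bounds on $\rho_{\psi_1}$ and $\rho_{\psi_2}$.

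First I would set
$\rho_{\neg \psi_1}(s,t) = -\rho_{\psi_1}(s,t)$, $\rho_{\psi_1 \wedge \psi_2}(s,t) = \min\!\bigl(\rho_{\psi_1}(s,t),\rho_{\psi_2}(s,t)\bigr)$, and $\rho_{\psi_1 \lor \psi_2}(s,t) = \max\!\bigl(\rho_{\psi_1}(s,t),\rho_{\psi_2}(s,t)\bigr)$, and quickly confirm with the recursive $\models$ semantics in the excerpt that each satisfies Definition~\ref{def:robustness}; for instance, $\rho_{\psi_1 \wedge \psi_2}(s,t)\geq 0$ iff $\rho_{\psi_i}(s,t)\geq 0$ for both $i$, iff $(s,t)\models \psi_1\wedge\psi_2$.

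Negation is then immediate: $|\rho_{\neg\psi_1}(s,t)-\rho_{\neg\psi_1}(z,t)| = |\rho_{\psi_1}(s,t)-\rho_{\psi_1}(z,t)| \leq L_{\psi_1}\|s-z\|_{[a_1,b_1]}$, so the Lipschitz constant and the seminorm carry over unchanged. For conjunction and disjunction the workhorse is the elementary fact that $\min$ and $\max$ are $1$-Lipschitz coordinate-wise: for any reals, $|\min(a_1,a_2)-\min(c_1,c_2)| \leq \max(|a_1-c_1|,|a_2-c_2|) \leq |a_1-c_1|+|a_2-c_2|$, and likewise for $\max$. Substituting $a_i=\rho_{\psi_i}(s,t)$, $c_i=\rho_{\psi_i}(z,t)$ and applying the hypothesized bounds yields $|\rho_{\psi_3}(s,t)-\rho_{\psi_3}(z,t)| \leq L_{\psi_1}\|s-z\|_{[a_1,b_1]} + L_{\psi_2}\|s-z\|_{[a_2,b_2]}$.

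The main obstacle I foresee is the final consolidation: the definition of a Lipschitz robustness measure asks for \emph{one} constant against \emph{one} seminorm, but the bound above is a sum against two possibly distinct seminorms $\|\cdot\|_{[a_1,b_1]}$ and $\|\cdot\|_{[a_2,b_2]}$. I would resolve this by appealing to the monotonicity of the family $\{\|\cdot\|_{[a,b]}\}$ from Definition~\ref{def:signal_norm} with respect to interval containment, to dominate both $\|s-z\|_{[a_i,b_i]}$ by a single seminorm $\|s-z\|_{I}$ over an appropriate combined interval $I$. This yields $L_{\psi_3}=L_{\psi_1}+L_{\psi_2}$ (or $\max(L_{\psi_1},L_{\psi_2})$ via the tighter $\max$-based step) with respect to $\|\cdot\|_I$. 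The rest is bookkeeping; the only nontrivial content is the $\min/\max$ Lipschitz inequality, which is elementary.
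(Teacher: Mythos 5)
Your proposal is correct and takes essentially the same route as the paper: both adopt the standard $\min$/$\max$ robustness semantics for $\wedge$/$\lor$ and reduce Lipschitz continuity to the $1$-Lipschitzness of $\min$ and $\max$ (the paper's inequality $|\max_x f(x)-\max_x g(x)|\le\max_x|f(x)-g(x)|$ over a two-point index set is exactly your coordinate-wise bound), arriving at $L_{\psi_3}=\max\{L_{\psi_1},L_{\psi_2}\}$ with respect to the seminorm over the combined interval $[\min\{a_1,a_2\},\max\{b_1,b_2\}]$. The only divergence is the negation case, where you take $\rho_{\neg\psi_1}=-\rho_{\psi_1}$ --- the choice that actually satisfies Definition~\ref{def:robustness} --- whereas the paper writes $\rho_{\psi_3}=\rho_{\psi_1}$; since negation leaves the Lipschitz constant and seminorm unchanged, this does not alter the conclusion, and your version is the more defensible one.
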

\begin{proof}
    For $\psi_3 = \neg \psi_1$, $\rho_{\psi_3} = \rho_{\psi_1}$ suffices, as $\rho_{\psi_1}$ is assumed to be Lipschitz with constant $L_{\psi_1} \geq 0$ and (semi)-norm $\|\cdot\|_{[a_1,b_1]}$.  The same holds for $\psi_3 = \neg \psi_2$.  For $\psi_3 = \psi_1 \wedge \psi_2$, $\rho_{\psi_3}(s) = \min\{\rho_{\psi_1}(s),\rho_{\psi_2}(s)\}$ is Lipschitz with constant $L_{\psi_3} = \max\{L_{\psi_1},L_{\psi_2}\}$ and (semi)-norm $\|\cdot\|_{\min\{a_1,a_2\},\max\{b_1,b_2\}}$.  Likewise, for $\psi_3 = \psi_1 \lor \psi_2$, $\rho_{\psi_3}(s) = \max\{\rho_{\psi_1}(s),\rho_{\psi_2}(s)\}$ is Lipschitz with the same constant $L_{\psi_3}$ and (semi)-norm as prior. Proving these relations amounts to multiple applications of the Cauchy-Schwarz inequality while noting that $\min_{x \in X}f(x) = -\max_{x \in X}(-f(x))$ and that for arbitrary spaces $X$ and functions $f,g:X \to \mathbb{R}$, the following inequality is true:
    \begin{equation}
        \label{eq:absolute_max_diff}
        \left|\max_{x \in X}~f(x) - \max_{x \in X}~g(x)\right| \leq \max_{x \in X}~\left|f(x) - g(x)\right|.
    \end{equation}
\end{proof}

\noindent The following lemma~\ref{lem:until_conjunction} proves that if specifications $\psi_1,\psi_2$ have Lipschitz robustness measures, then $\psi_3 = \psi_1 \until_{[a_3,b_3]} \psi_2$ also has a robustness measure that is Lipschitz.

\begin{lemma}
\label{lem:until_conjunction}
    Let $\psi_1,\psi_2$ be STL specifications as per~\eqref{eq:spec} such that their corresponding robustness measures $\rho_{\psi_1},\rho_{\psi_2}$ are Lipschitz with constants $L_{\psi_1}, L_{\psi_2}$ and with respect to (perhaps) different (semi)-norms $\|\cdot\|_{[a_1,b_1]},\|\cdot\|_{[a_2,b_2]}$.  Then, $\psi_3 = \psi_1\until_{[a,b]}\psi_2$ has a Lipschitz robustness measure.
\end{lemma}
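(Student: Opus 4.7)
The plan is to equip $\psi_3 = \psi_1 \until_{[a,b]} \psi_2$ with the standard quantitative semantics for until, namely
\begin{equation*}
\rho_{\psi_3}(s,t) \;=\; \sup_{t' \in [t+a,\,t+b]} \min\!\left\{\, \rho_{\psi_2}(s,t'),\; \inf_{t'' \in [t+a,\,t']} \rho_{\psi_1}(s,t'') \right\},
\end{equation*}
and then verify (a) that $\rho_{\psi_3}(s,t)\ge 0 \iff (s,t)\models\psi_3$, and (b) that $\rho_{\psi_3}$ is Lipschitz with respect to a single composite signal (semi)-norm built from those of $\rho_{\psi_1}$ and $\rho_{\psi_2}$. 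Part (a) is an immediate unrolling of the satisfaction semantics of until together with the inductive hypothesis that each $\rho_{\psi_i}$ is a robustness measure, so the real work sits in (b).

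For the Lipschitz argument, I would first collect the scalar inequalities I need to push absolute differences through the sup/inf/min stack: inequality \eqref{eq:absolute_max_diff} for the outer $\sup$, its dual $\bigl|\inf_x f(x)-\inf_x g(x)\bigr| \le \sup_x |f(x)-g(x)|$ (obtained from \eqref{eq:absolute_max_diff} via $\inf f = -\sup(-f)$) for the inner $\inf$, and the pointwise bound $|\min\{a,b\}-\min\{c,d\}| \le \max\{|a-c|,|b-d|\}$. Applying these in order to $|\rho_{\psi_3}(s,t)-\rho_{\psi_3}(z,t)|$ yields
\begin{equation*}
\bigl|\rho_{\psi_3}(s,t)-\rho_{\psi_3}(z,t)\bigr| \;\le\; \sup_{t' \in [t+a,\,t+b]} \max\!\left\{ \bigl|\rho_{\psi_2}(s,t')-\rho_{\psi_2}(z,t')\bigr|,\; \sup_{t'' \in [t+a,\,t']} \bigl|\rho_{\psi_1}(s,t'')-\rho_{\psi_1}(z,t'')\bigr| \right\}.
\end{equation*}
Each inner difference is then controlled using the inductive Lipschitz hypotheses on $\rho_{\psi_1}$ and $\rho_{\psi_2}$, which gives a bound by $\max\{L_{\psi_1},L_{\psi_2}\}$ times a signal (semi)-norm of $s-z$ whose underlying time interval is the union of the shifted horizons appearing above.

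The final step is bookkeeping to identify the resulting constant and (semi)-norm. I expect to conclude that $\rho_{\psi_3}$ is Lipschitz with constant $L_{\psi_3} = \max\{L_{\psi_1}, L_{\psi_2}\}$ and signal (semi)-norm over the interval obtained by taking the union of $[t+a,t+b]+[a_2,b_2]$ and $[t+a,t+b]+[a_1,b_1]$, i.e.\ $\bigl[t+a+\min\{a_1,a_2\},\,t+b+\max\{b_1,b_2\}\bigr]$, which matches the pattern already established for the boolean combinators in Lemma \ref{lem:logical_conjunctions}.

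The part I expect to be thorniest is precisely this bookkeeping: each $\rho_{\psi_i}$ is itself time-dependent, so when it appears under the outer $\sup_{t'}$ or the inner $\inf_{t''}$, its Lipschitz inequality must be re-expressed in terms of a (semi)-norm of $s-z$ whose interval shifts with the quantified time. Carefully showing that these time-dependent intervals can be absorbed into a single, larger interval (and gracefully handling the case $b=\infty$, where the supremum is taken over an unbounded horizon) is where the proof needs to be written most explicitly; beyond that, the algebra is a straightforward invocation of the three scalar inequalities above, mirroring the strategy of Lemma \ref{lem:logical_conjunctions}.
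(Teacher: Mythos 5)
Your proposal follows essentially the same route as the paper: the same max--min candidate robustness measure for the until operator, repeated application of inequality~\eqref{eq:absolute_max_diff} (together with its $\inf$/$\min$ analogues) to push the absolute difference inside the quantifiers, and the same conclusion $L_{\psi_3}=\max\{L_{\psi_1},L_{\psi_2}\}$ with a combined (semi)-norm interval, handling the time-shift bookkeeping just as the paper does via the observation that $\rho(s',\tilde t)=\rho(s'',0)$ for the shifted signal. The only divergence is cosmetic: the paper's candidate takes the inner minimum over $t''\in[0,t']$ rather than $[t+a,t']$, which moves the lower endpoint of the resulting interval to $\min\{a_1,a_2+a\}$ instead of $a+\min\{a_1,a_2\}$, but the argument is otherwise identical.
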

\begin{proof}
For notational ease, we abbreviate $|\rho_{\psi_i}(s,t) - \rho_{\psi_i}(z,t)| = \Delta_i(t)$ for signals $s,z \in \signalspace$. Consider the candidate robustness measure for $\psi_3$,
    \begin{equation}
        \rho_{\psi_3}(s,0) = \max_{t' \in [a,b]}\left(\min\left(\rho_{\psi_2}(s,t'), \min_{t'' \in [0,t']}\rho_{\psi_1}(s,t'')\right) \right).
    \end{equation}
To prove that $\rho_{\psi_3}$ is Lipschitz, equation~\eqref{eq:absolute_max_diff} is applied thrice, along with the fact $\min_{x \in X}~f(x) = -\max_{x \in X}(-f(x))$, to construct the following inequality:
\begin{equation}
    \Delta_3(0) \leq \max_{t' \in [a,b]}\left(\max\left(\Delta_2(t'), \max_{t'' \in [0,t']}\Delta_1(t'')\right) \right).
\end{equation}
Given that the component robustness measures $\rho_{\psi_1},\rho_{\psi_2}$ are Lipschitz, note that the inner arguments $\Delta_2(t')$ and $\Delta_1(t'')$ are dominated by their corresponding Lipschitz bounds.  While the bound has only been assumed for signals $s,z$ evaluated at $t=0$, we note that the robustness evaluation of any signal $s' \in \signalspace$ at some arbitrary time $\Tilde t \in \mathbb{R}_{\geq 0}$ is such that $\rho(s',\Tilde t) = \rho(s'',0)$ where $s''(0) = s'(\Tilde t)$. Setting $L_{\psi_3} = \max\{L_{\psi_1},L_{\psi_2}\}$, the following inequality is true: 
\begin{equation}
    \Delta_3(0) \leq L_{\psi_3} \|s-z\|_{\big[\underbrace{\min\{a_1,a_2+a\}}_{a_3}, ~\underbrace{b+\max\{b_1,b_2\}}_{b_3}\big]},
\end{equation}
thus completing the proof.
\end{proof}

\noindent The following theorem makes an inductive argument utilizing the Lemmas~\ref{lem:predicate}---\ref{lem:until_conjunction} to show that all STL specifications defined in~\eqref{eq:spec} have Lipschitz robustness measures.
\begin{theorem}\label{thm:stl_lipz}
All STL specifications $\psi$ as per the syntax~\eqref{eq:spec} have Lipschitz robustness measures $\rho_{\psi}$ as per Definition~\ref{def:robustness} with respect to a (semi)norm $\|\cdot\|_{[a,b]}$ as per Definition~\ref{def:signal_norm}.
\end{theorem}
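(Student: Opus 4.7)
The plan is a structural induction on the grammar of STL formulas given in~\eqref{eq:spec}, with induction hypothesis that every strict subformula of $\psi$ already satisfies the conclusion of the theorem --- that is, it admits a robustness measure that is Lipschitz continuous with respect to some signal semi-norm $\|\cdot\|_{[a',b']}$ of the form in Definition~\ref{def:signal_norm}, with some finite constant $L' \geq 0$. The five syntactic clauses of the grammar then partition the argument into one base case and four inductive cases.

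The base case $\psi = \mu$ is delivered by Lemma~\ref{lem:predicate}: the function $\rho_{\mu}(s,t) := h_{\mu}(s(t))$ serves as the robustness measure, and Lipschitz continuity of $h_{\mu}$ on $\mathcal{X}$ lifts directly to a Lipschitz bound against the degenerate semi-norm $\|\cdot\|_{[t,t]}$, since $\|s - z\|_{[t,t]} = \|s(t) - z(t)\|$. For the inductive step, I would split on the top-level operator of $\psi$. If $\psi$ is $\neg \psi_1$, $\psi_1 \wedge \psi_2$, or $\psi_1 \vee \psi_2$, then Lemma~\ref{lem:logical_conjunctions} directly supplies a Lipschitz robustness measure with constant $\max\{L_{\psi_1}, L_{\psi_2}\}$ against the merged semi-norm $\|\cdot\|_{[\min\{a_1,a_2\},\,\max\{b_1,b_2\}]}$. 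If $\psi = \psi_1 \until_{[a,b]} \psi_2$, then Lemma~\ref{lem:until_conjunction} provides the analogous result, enlarging the interval to $[\min\{a_1, a_2+a\},\, b + \max\{b_1,b_2\}]$ to account for the time shift induced by the temporal operator. Iterating these constructions up the parse tree produces a single Lipschitz constant and a single semi-norm at the root $\psi$, which is precisely the claim of the theorem.

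I do not anticipate a substantive analytical obstacle here, since Lemmas~\ref{lem:predicate}--\ref{lem:until_conjunction} already perform all of the nontrivial Lipschitz estimates. The only point that requires real care is bookkeeping: one must verify that the semi-norm interval and Lipschitz constant produced at each inductive step are in the exact form consumed by the next invocation of Lemma~\ref{lem:logical_conjunctions} or~\ref{lem:until_conjunction}, so that the recursion composes correctly across an arbitrary parse depth. Because each lemma gives an explicit, closed-form rule for both the new Lipschitz constant (a max over children) and the new semi-norm interval (a union, possibly shifted by a temporal window), this composition is unambiguous, and the induction terminates in a number of steps equal to the parse depth of $\psi$, completing the proof.
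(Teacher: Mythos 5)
Your proposal is correct and follows essentially the same route as the paper: a structural induction over the grammar in~\eqref{eq:spec} whose base case rests on Lemma~\ref{lem:predicate} (with the degenerate semi-norm $\|\cdot\|_{[t,t]}$ identified with the $2$-norm) and whose inductive steps are exactly Lemmas~\ref{lem:logical_conjunctions} and~\ref{lem:until_conjunction}. The only cosmetic difference is that the paper takes $\mu_1 \until_{[a,b]} \mu_2$ as its stated base case rather than a bare proposition $\mu$, which changes nothing of substance.
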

\begin{proof}
    This proof follows an inductive argument. Consider the base case with specifications of the form $\psi = \mu_1 \until_{[a,b]} \mu_2$ with $\mu_1,\mu_2$ being propositions. Observe that \(\psi\) has Lipschitz robustness measures with respect to (semi)-norms over $\signalspace$.
    This follows from Lemmas~\ref{lem:predicate} and \ref{lem:until_conjunction}.  Note that Lemma~\ref{lem:predicate} proves that both propositions $\mu_1,\mu_2$ have robustness measures that are Lipschitz continuous with respect to the $2$-norm $\|\cdot\|$.  Furthermore, note that $\|\cdot\|_{[0,0]} = \|\cdot\|$.  Therefore, applying Lemma~\ref{lem:until_conjunction} in the case where the two specifications have Lipschitz continuous robustness measures with respect to the signal (semi)norm $\|\cdot\|_{[0,0]}$ proves the base case. Lemmas~\ref{lem:logical_conjunctions} and~\ref{lem:until_conjunction} provide the inductive steps. As the base and inductive steps are true, this completes the proof.
\end{proof}

\section{Control Barrier Function Synthesis}
\label{sec:cbf_synt}
We utilize these Lipschitz-continuous robustness measures for STL specifications to construct time-varying CBFs guaranteeing specification satisfaction to address Problem~\ref{prob}.  At a high level, this approach requires an expert signal $s$ satisfying the desired system specification $\psi$, though how that signal is generated is left to the practitioner.

\begin{remark}
We note that oftentimes it is easier to generate safe and satisfactory trajectories via reduced order models~\cite{molnar2021model}.  Therefore, the assumption of the existence of such a satisfying expert signal is (perhaps) easily satisfied.
\end{remark}

\noindent Formally, let the desired system specification be \(\psi\), and the expert trajectory be denoted by \(s \in \mathcal{S}\). By Theorem 2, we know that the robustness measure $\rho_{\psi}$ for the STL formula $\psi$ is Lipschitz-continuous with respect to some signal norm $\|\cdot\|_{[0,T]}$ and constant $L_{\psi} \geq 0$.  Note that as we only evaluate the robustness of state trajectories starting at time $t=0$, we denoted the signal (semi)norm to consider times $t \in [0, T]$, where the maximum time $T$ is determined by the specification of interest.  An example will be provided in a section to follow.  As a result, for all signals \(z \in \mathcal{S}^{\mathbb{R}^n}\), 
\begin{equation}\label{eq:lipz}
    |\rho_{\psi}(s, 0) - \rho_{\psi}(z,0)| \leq L_{\psi} \|s - z\|_{[0,T]}\, .
\end{equation}
\noindent Now consider signals $z$ satisfying the  following condition:
\begin{equation}
     L_\psi\|s-z\|_{[0,T]} \leq \rho_{\psi}(s,0),
      \label{eq:satisfying_traj}
\end{equation}
The set of all such signals $z$ constitutes a signal tube (around the expert trajectory \(s\)) of signals satisfying specification \(\psi\). If the system trajectory is enforced to remain within this signal tube, then the system will satisfy the specification \(\psi\). To enforce this condition, we propose the following time-varying control barrier function, 
    \begin{equation}
        h^s_{\psi}(x,t) = \rho_{\psi}(s,0)^2 - L_{\psi}^2\|x - s(t)\|^2.
        \label{eq:barrier_function}
    \end{equation}
Our main result proves that the proposed function \(h^s_{\psi}(x,t)\) is a valid time-varying control barrier function according to Definition~\ref{def:time_varying_cbf}. Therefore, by filtering control inputs to satisfy the inequality~\eqref{eq:tv_cbf_criteria}, we ensure positivity of the function $h^s_{\psi}$ and satisfaction of specification $\psi$.  To prove this result, we use an assumption from the works we extend~\cite{lindemann2018control,lindemann2019robust,lindemann2020barrier}:
\begin{assumption}\label{assump:pd}
The control forcing function \(g(x)\) is such that \(g(x).g(x)^{\top}\) is positive definite for all \(x \in \mathcal{X}\).
\end{assumption}
\begin{theorem}
\label{thm:valid_tvcbf}
Let Assumption~\ref{assump:pd} hold.  The function $h^s_{\psi}(x,t)$ as per~\eqref{eq:barrier_function} is a time-varying control barrier function as per Definition~\ref{def:time_varying_cbf} for our nonlinear control-affine system~\eqref{eq:base_system}.  Additionally, for some signal $z \in \signalspace$, 
\begin{equation}
h^s_{\psi}(z(t),t) \geq 0,~\forall~t \in [0,T] \implies z \models \psi.
\end{equation}
\end{theorem}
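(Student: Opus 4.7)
The statement has two parts: (i) $h^s_{\psi}$ satisfies the TVCBF inequality \eqref{eq:tv_cbf_criteria} on all of $\mathcal{X} \times \mathbb{R}_{\geq 0}$, and (ii) pointwise nonnegativity of $h^s_{\psi}$ along a signal $z$ forces $z \models \psi$. I would dispatch (ii) first, since it is essentially a direct consequence of Theorem~\ref{thm:stl_lipz} and requires almost no new work. Assuming $h^s_{\psi}(z(t),t) \geq 0$ for all $t \in [0,T]$, the definition of $h^s_{\psi}$ in \eqref{eq:barrier_function} gives $L_{\psi}\|z(t)-s(t)\| \leq \rho_{\psi}(s,0)$ pointwise in $t$, hence $L_{\psi}\|s-z\|_{[0,T]} \leq \rho_{\psi}(s,0)$. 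Combining this with the Lipschitz bound \eqref{eq:lipz} yields $\rho_{\psi}(s,0) - \rho_{\psi}(z,0) \leq L_{\psi}\|s-z\|_{[0,T]} \leq \rho_{\psi}(s,0)$, so $\rho_{\psi}(z,0) \geq 0$, and Definition~\ref{def:robustness} then gives $z \models \psi$.

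For part (i), I would start by explicitly computing the gradients
\[
\tfrac{\partial h^s_{\psi}}{\partial x} = -2 L_{\psi}^{2}(x-s(t))^{\top}, \qquad \tfrac{\partial h^s_{\psi}}{\partial t} = 2 L_{\psi}^{2} (x-s(t))^{\top} \dot s(t),
\]
so that the left-hand side of \eqref{eq:tv_cbf_criteria} becomes
\[
2 L_{\psi}^{2} (x-s(t))^{\top}\bigl(\dot s(t) - f(x) - g(x)u\bigr).
\]
The idea is then to split on whether the coefficient of $u$ is zero. If $g(x)^{\top}(x-s(t)) \neq 0$, then as $u$ ranges over $\mathbb{R}^{m}$ the term $-2L_{\psi}^{2}(x-s(t))^{\top}g(x)u$ is unbounded above, so the supremum in \eqref{eq:tv_cbf_criteria} is $+\infty$ and the inequality holds trivially.

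The delicate case is $g(x)^{\top}(x-s(t)) = 0$. Here I would invoke Assumption~\ref{assump:pd}: positive definiteness of $g(x)g(x)^{\top}$ means $g(x)$ has full row rank, so $g(x)^{\top} v = 0$ forces $v = 0$; hence $x = s(t)$. Plugging $x = s(t)$ back into the expression shows the left-hand side of \eqref{eq:tv_cbf_criteria} evaluates to $0$, while $h^s_{\psi}(s(t),t) = \rho_{\psi}(s,0)^{2} \geq 0$, so the required inequality reduces to $0 \geq -\alpha\bigl(\rho_{\psi}(s,0)^{2}\bigr)$, which holds because $\alpha$ is class-$\mathcal{K}_{e}$ (hence nonnegative on $\mathbb{R}_{\geq 0}$). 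This closes the argument on both cases.

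The main obstacle I anticipate is not the algebra but rather the degenerate case $x = s(t)$, where the control channel vanishes from the CBF inequality; the argument hinges crucially on Assumption~\ref{assump:pd} to rule out the analogous degeneracy away from the expert trajectory, and on the sign properties of class-$\mathcal{K}_{e}$ functions to validate the inequality at the expert trajectory itself. A secondary subtlety worth flagging is the interpretation of $\|\cdot\|_{[0,T]}$ in the bound passed from pointwise control of $\|z(t)-s(t)\|$ to the signal (semi)norm used in \eqref{eq:lipz}; this step is harmless under either the $\min$ or $\sup$ reading of Definition~\ref{def:signal_norm}, since pointwise dominance transfers to either aggregate.
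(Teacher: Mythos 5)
Your proof is correct and follows essentially the same route as the paper: the same case split on whether the control channel $L_g h^s_{\psi}$ vanishes (with Assumption~\ref{assump:pd} forcing $x = s(t)$ in the degenerate case), and the same Lipschitz argument via \eqref{eq:lipz} and Definition~\ref{def:robustness} for the implication $z \models \psi$. The only cosmetic difference is that in the nondegenerate case the paper exhibits an explicit feasible input via the pseudo-inverse of $L_g h^s_{\psi}$ whereas you observe that the supremum is $+\infty$; both are valid, and your handling of the degenerate case (reducing to $0 \geq -\alpha\bigl(\rho_{\psi}(s,0)^{2}\bigr)$ rather than the paper's slightly imprecise claim that the inequality ``collapses to $0 \geq 0$'') is, if anything, cleaner.
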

\begin{proof}
The first part of this proof will show that the candidate \(h^s_{\psi}(x,t)\) as defined in equation~\eqref{eq:barrier_function} is a time-varying control barrier function. 
To show this, note that the Lie derivative \( L_gh^s_{\psi} = \big(\frac{\partial h^s_{\psi}(x,t)}{\partial x}\big)^{\top}g(x) = \mathbf{0}^{\top}_m\) iff \(\frac{\partial h^s_{\psi}(x,t)}{\partial x} = \mathbf{0}_n\). This is due to Assumption~\ref{assump:pd} and the rank-nullity theorem which implies that the nullspace of \(g(x)\) is empty. As a result, we have two cases $\forall~(x,t)$, either (a) $L_gh^s_{\psi}(x,t) = \mathbf{0}_m$ or (b) $L_gh^s_{\psi} \neq \mathbf{0}_m$.  In (a), the prior line of reasoning indicates that $x = s(t)$, which results in $\dot h^s_{\psi}(x,t) = 0$ and $h^s_{\psi}(x,t) = \rho(s,0) \geq 0$.  As a result, the differential inequality~\eqref{eq:tv_cbf_criteria} collapses to $0 \geq 0$ for any choice of class-$\mathcal{K}_e$ function $\alpha$.  Therefore, any choice of input $u \in \mathbb{R}^m$ satisfies the differential inequality, so the supremum over all $u \in \mathbb{R}^m$ necessarily satisfies the same inequality.   In case (b), \(x\neq s_t\) and the following control input \(u\) ensures that the inequality~\eqref{eq:tv_cbf_criteria} holds for any class-$\mathcal{K}_e$ function $\alpha$:
\begin{equation}
    u = -(L_gh^s_{\psi})^{\dagger}\left(\alpha(h^s_{\psi}(x,t)) +\frac{\partial h^s_{\psi}}{\partial t} + L_fh^s_{\psi}\right),
\end{equation}
where \((L_gh^s_{\psi})^{\dagger}\) is the pseudo-inverse of \(L_gh^s_{\psi}\). Therefore, the supremum necessarily satisfies the same inequality.  As the result holds in both cases for any choice of class-$\mathcal{K}_e$ function $\alpha$, \(h^s_{\psi}(x,t)\) is a time-varying control barrier function.

The second part of this proof stems via Theorem~\ref{thm:stl_lipz}. If for some signal $z$, $h^s_{\psi}(z(t),t) \geq 0,~\forall~t \in [0,T]$, then by~\eqref{eq:lipz},
\begin{equation}
    L_{\psi}\|z - s\|_{[0,T]} \leq \rho_{\psi}(s,0) \implies \rho_{\psi}(z,0) \geq 0.
\end{equation}
By Definition~\ref{def:robustness} then, $z \models \psi$.
\end{proof}

\section{Reactive Path Planning Example}
\label{sec:examples}
This section illustrates our method applied to generate a reactive path planner for a single-agent system~\cite{wilson2020robotarium}. Here, we represent the true system via a unicycle model, \textit{i.e.} with system state $x \in \mathcal{X} \subseteq \mathbb{R}^3$ and control input $u \in \mathcal{U} \subset \mathbb{R}^2$:
\begin{equation}
    \label{eq:unicyle_model}
    \begin{aligned}
        \dot{x} & = \begin{bmatrix}
        \cos\left(x[3]\right) & 0 \\
        \sin\left(x[3]\right) & 0 \\
        0 & 1
        \end{bmatrix}u,~
        x = \begin{bmatrix}
        p_x \\
        p_y \\
        \theta
        \end{bmatrix},
        u = \begin{bmatrix}
        v \\
        \omega
        \end{bmatrix}.
    \end{aligned}
\end{equation}
Furthermore, we aim to generate a controller for this system that always identifies feasible paths to a goal, in a randomly generated maze-like setting with moving obstacles. More accurately, over a subset of the plane $\mathcal{W} = [-1.6,1.6] \times [-1,1]$, we define the set of considered operating environments $E$ as follows: (1) there must exist $8$ static obstacles occupying different cells in an $8 \times 5$ evenly-spaced grid overlaid on $\mathcal{W}$; (2) the starting locations of $4$ moving obstacles $o_i \in \mathcal{W}$ and the ego agent's initial state $x_0 \in \mathcal{X}$ must occupy different cells and not exist within static obstacles; (3) there must exist $3$ goal cells for the agent with at least one feasible path existing to a goal at initialization --- the center of each goal cell $g_i \in \mathcal{W}$; (4) the moving agents stop when they reach within $0.2$ meters of the ego agent, and can only progress in their motion if the agent moves away --- this is to prevent faster-moving obstacles from crashing into the ego-agent without the agent having the ability to react.  Figure~\ref{fig:expert_controller} depicts an example setup of a valid environment $E$.

To formally define our specification $\psi$, we first define $SO \subset \mathcal{W}$ to be the set of planar centers of the $8$ static obstacles in our environment $E$.  Second, we define $P_{\odot}:\mathcal{W} \to \mathbb{R}_{\geq 0}$ to be the path distance function defined as follows: (1) Identify the shortest feasible path --- with respect to the number of cells that must be traversed --- between the planar projection of the initial condition $x_0$ and a goal in the set $\odot \subset \mathcal{W}$; (2) record the center $c$ of the corresponding goal and output $\|w-c\|$.  The formal system specification $\psi$ is as follows, with $\Pi = [\mathbf{I}_2,\mathbf{0}_{2 \times 1}]$:
\begin{equation}
    \label{eq:example_spec}
    \psi = (B \lor \psi_G) \wedge (\neg B \lor \psi_H).
\end{equation}
The subformulas $B$ and $\psi_{\odot}$ are defined as follows:
\begin{itemize}
    \item[($\psi_{\odot}$)] \emph{Eventually} make non-trivial progress towards the closest feasible goal in the set $\odot$ over a $10$ second period.  Here $G = \{g_1,g_2,g_3\}$ corresponds to the randomly generated goals as part of the environment $E$ as previously described, and $H = \{h_1,h_2,h_3,h_4\}$ corresponds to the centers of four home cells at the corners of the $8 \times 5$ grid. Also, \emph{always} avoid static and moving obstacles and treat moving obstacles to be frozen at the time instant of evaluation.  Mathematically, a waypoint signal $\mathbf{w}$ taking values in $\mathcal{W}$ satisfies $\psi_{\odot}$ at time $t$, iff
        \begin{enumerate}
            \item $P_{\odot}(\Pi x_0) - P_{\odot}(\mathbf{w}(t+10)) \geq 0$, and
            \item $\forall~t' \in [t,t+10]$, $\min_{o \in SO}~\|\mathbf{w}(t') - o\|_{\infty} \geq 0.2$m and $\min_{i=1,2,3,4} \|\mathbf{w}(t') - o_i(t)\| \geq 0.18$m.
        \end{enumerate}
    \item[($B$)] $B$ is an atomic proposition related to the existence of a basing signal.  If the signal exists at time $t$ then $B(t) = \true$.  If not, then $B(t) = \false$.
\end{itemize}
In short, \(\psi\) represents the requirement that the system switch between two different sets of goals, \(G\) and \(H\), upon receiving a basing signal \(B\), while also avoiding obstacles.
\begin{figure}[t]
    \centering
    \vspace{0.075 in}
    \includegraphics[width = \columnwidth]{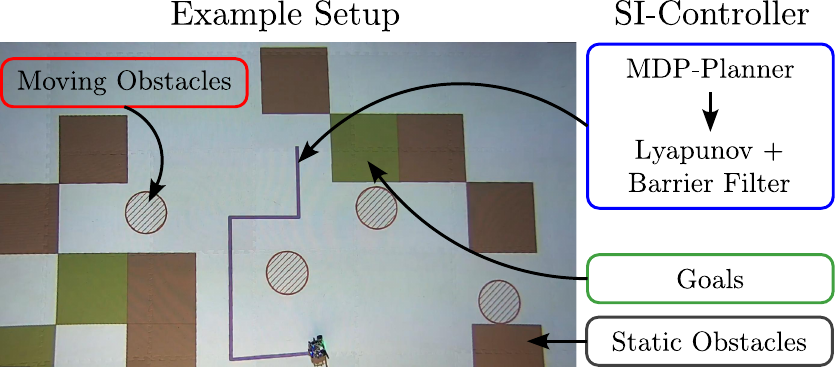} \vspace{-0.2 in}
    \caption{Expert controller architecture for reduced-order single-integrator system utilized to generate signals in Section~\ref{sec:examples}, and an example hardware setup of an environment considered in the same section.} \vspace{-0.2 in}
    \label{fig:expert_controller}
\end{figure}

\newidea{Expert Signal Construction} To construct an expert signal $s$ satisfying our desired specification $\psi$~\eqref{eq:example_spec}, we queried a controlled single-integrator system.  Figure~\ref{fig:expert_controller} depicts the expert system's control architecture along with an example setting consistent with the prior description.  Through exhaustive Monte Carlo testing, we determined that this controller architecture should always produce a single integrator waypoint trajectory $\mathbf{w}\models \psi$. To construct our expert trajectory $s$ that takes values in $\mathbb{R}^3$, we define $s(t)$ as follows:
\begin{equation}
    s(t) = \begin{bmatrix}
        \mathbf{w}(t) \\
        \arctan\left(\frac{\dot{\mathbf{w}}(t)[1]}{\dot{\mathbf{w}}(t)[0]}\right)
    \end{bmatrix},~\mathbf{w}(t) = \begin{bmatrix}
        p^{\mathrm{si}}_x(t) \\
        p^{\mathrm{si}}_y(t)
    \end{bmatrix}.
\end{equation}

\newidea{CBF Construction}  To define our time-varying CBF filter, we need to evaluate the robustness of the expert trajectory $s$ determined prior.  To construct this robustness measure $\rho_{\psi}$, we define an obstacle avoidance function $OA$:
\begin{equation}
    \label{eq:obstacle_avoidance}
    OA(x,t) = \min \left(
    \begin{gathered}
        \min_{o \in SO} \|x-o\|_{\infty} - 0.2, \\
        \min_{i=1,2,3,4} \|x - o_i(t)\| -0.18.
    \end{gathered}
    \right)
\end{equation}
and a path difference function $\Delta P_{\odot}$ where:
\begin{equation}
    \label{eq:lyapunov_difference}
    \Delta P_{\odot}(s,t) = P_{\odot}(\Pi x_0) - P_{\odot}(\Pi s(t+10)).
\end{equation}
Then, our robustness measure $\rho_{\psi}$ is as follows:
\begin{gather}
    \Phi(s,t,\odot) = \min\left(\Delta P_{\odot}(s,t), \min_{t' \in [t,t+10]} OA(\Pi s(t'),t)  \right)
    \\
    \label{eq:example_robustness}
    \rho_{\psi}(s,t) = \begin{cases}
    \Phi(s,t,H) & \mbox{if~}B(t) = \true, \\
    \Phi(s,t,G) & \mbox{if~}B(t) = \false.
    \end{cases}
\end{gather}

\begin{figure*}[t]
    \centering
    \vspace{0.075in}
    \includegraphics[width = \textwidth]{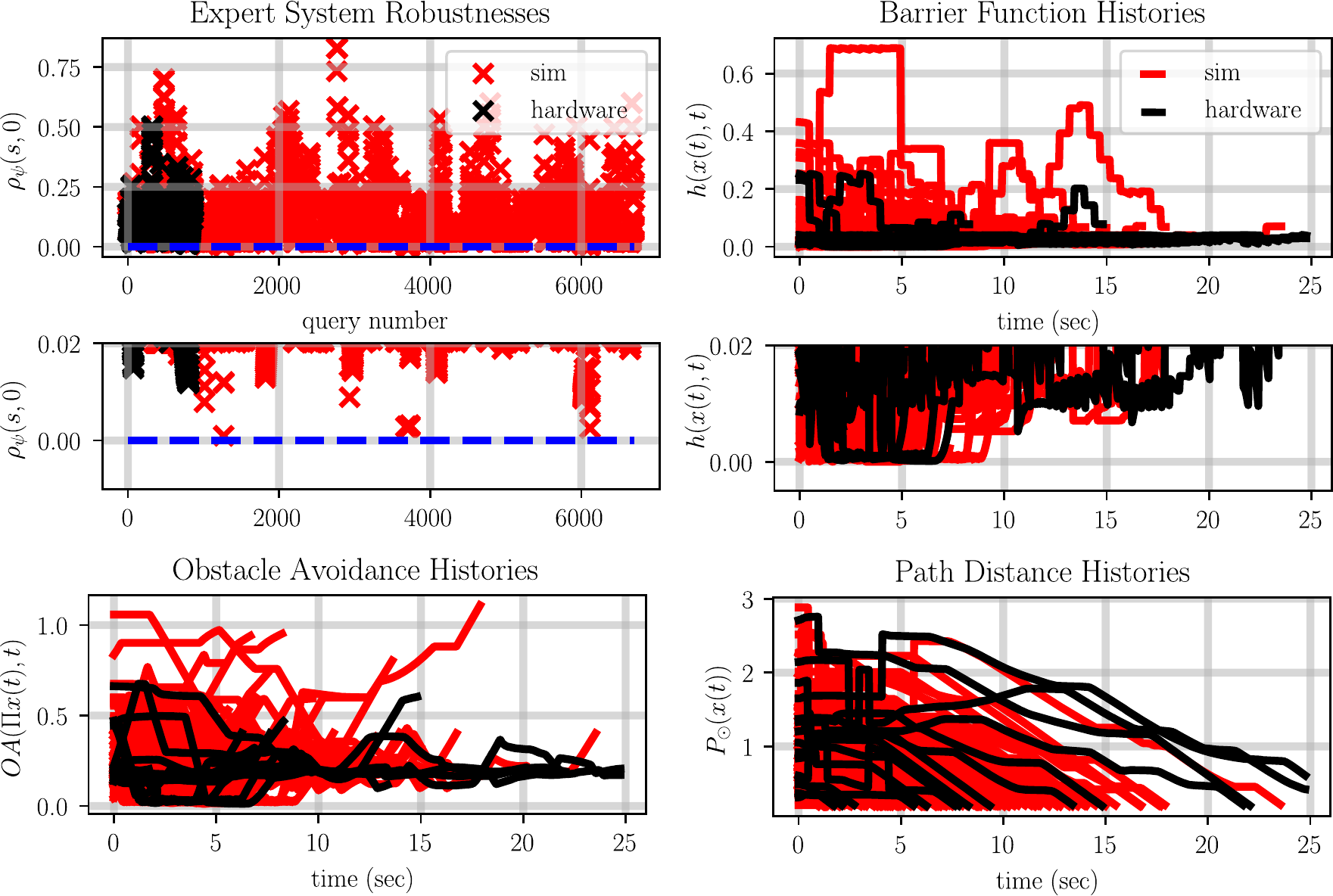} \vspace{-0.2 in}
    \caption{Data corresponding to $145$ simulation trials (red) and $15$ hardware trials (black) of a unicycle system with a controller filtered by the synthesized time-varying CBF. Notice that every time we queried the expert system, the expert signal's robustness is $\rho_{\psi}(s,0) \geq 0$ as emphasized in the two, top left stacked figures, with the bottom a magnification of robustness values between $0$ and $0.02$.  For each valid expert signal, we constructed a barrier function $h^s_{\psi}$ as per~\eqref{eq:example_barrier_filter}.  The time trajectories of these barrier functions are depicted on the top right, with a magnification of values between $0$ and $0.02$ on the middle right.  Notice that the barrier value was always positive.  Per our theory, this should imply that our true system always satisfied the specification $\psi$ --- avoid obstacles and make non-trivial progress towards a goal within $10$ seconds.  (Bottom Left) The histories of the obstacle avoidance function $OA$~\eqref{eq:obstacle_avoidance} for our true system for each trial.  This function was always positive indicating the system never crashed into any static or moving obstacles, as theorized.  (Bottom Right) The histories of the path distance function $P_{\odot}$ described in our specification description~\eqref{eq:example_spec}.  Notice that even if the goals change and there is a momentary sharp increase in the path distance to a goal, the system always makes non-trivial progress towards its goals as all distances decay to $0$ eventually.} \vspace{-0.2 in}
    \label{fig:sim_data}
\end{figure*}

Notice that $\rho_{\psi}(s,0) \geq 0 \iff (s,0) \models \psi$ insofar as we're checking for positivity of the same conditions as expressed in the mathematical formalization of our specification $\psi$~\eqref{eq:example_spec}.  Furthermore, we also know that $\rho_{\psi}$~\eqref{eq:example_robustness} is Lipschitz continuous when evaluated at time $t=0$ with constant $L_{\psi}=1$ and signal (semi)norm $\|\cdot\|^Q_{[0,10]}$.  Here, $\|x\|^Q = \sqrt{x^TQx}$ where $Q = \mathrm{diag}(\mathbf{I}_{2 \times 2}, 0)$ and $\|s\|^Q_{[0,10]} = \max_{t \in [0,10]}\|s(t)\|^Q$.  To show this, we know the obstacle avoidance function~\eqref{eq:obstacle_avoidance} is Lipschitz continuous with respect to the Euclidean $2$-norm over $\mathbb{R}^2$ and constant $L=1$.  Here, we note that for any vector $v \in \mathbb{R}^n$, $\|v\|_{\infty} \leq \|v\|$.  Second, the path difference function is also Lispchitz continuous with respect to the Euclidean $2$-norm over $\mathbb{R}^2$ and constant $L=1$.  Then, the signal (semi)norm arises via the minimization over time in~\eqref{eq:example_robustness}, noting that for any signal $z$ taking values in $\mathbb{R}^n$, $\|z(t+10)\| \leq \|z\|_{[0,10]}$ by Definition~\ref{def:signal_norm}.  Finally, we note that for any vector $v \in \mathbb{R}^3$, $\|v\|^Q \leq \|v\|$ and therefore for any signal $z$ taking values in $\mathbb{R}^3$, $\|z\|^Q_{[0,10]} \leq \|z\|_{[0,10]}$.  As such, our robustness measure $\rho_{\psi}$ still satisfies the conditions of Theorem~\ref{thm:stl_lipz}, though we note that using the signal (semi)norm $\|\cdot\|^Q_{[0,10]}$ as opposed to $\|\cdot\|_{[0,10]}$ increases performance as will be detailed below.  As a result, our time-varying CBF according to Theorem~\ref{thm:valid_tvcbf} is as follows:
\begin{equation}
    \label{eq:example_barrier_filter}
    h^s_{\psi}(x,t) = \rho_{\psi}(s,0)^2 - (x-s(t))^T Q (x-s(t)).
\end{equation}

\newidea{Implementation:} We initialize an environment $E$ consistent with the prior description, query the expert, controlled single-integrator system to produce an initial expert signal $s$, and define our unicycle agent's controller to be a Lyapunov-based controller filtered via the barrier function in~\eqref{eq:example_barrier_filter}.  Mathematically, we define $x(t)$ to be the state of the unicycle system at time $t$, $V(t) = \|x(t) -s(t)\|$, and $u_v(t)$ to be the control Lyapunov input that ensures that $\dot V(t) = -2V(t)$.  Then, we define the filtered, un-bounded input $u(t)$ as the solution to the following optimization problem:
\begin{align}
    u(t) & = \argmin_{u \in \mathbb{R}^m}~& &\|u - u_v(t)\| ,  \label{eq:example_controller} \\
    & ~~\mathrm{subject~to}~ & & \dot h^s_{\psi}(x(t),t,u) \geq -2h^s_{\psi}(x(t),t).
\end{align}
Finally, the controller $U$ is constructed by projecting this input $u(t)$ to the input space $\mathcal{U} = [-0.2,0.2] \times [-\frac{\pi}{4},\frac{\pi}{4}]$ at every time step \(t\).
We zero-order hold that input for a $\Delta t = 0.033$ second interval, simulating control inputs provided to our system at $30$ Hz --- the update frequency of the Robotarium.  Finally, we re-query the expert system every $0.25$ seconds to account for moving obstacles, resetting the expert signal $s$.  At every reset time $t_r$, we reset the signal time to $0$ but keep track of true system time separately and account for the shifted time in our Lyapunov controller. Finally, the initial condition $x_0$ for the path difference function $P_{\odot}$ utilized to define the robustness measure~\eqref{eq:example_robustness} is reset to the unicycle state at the reset time, \textit{i.e.} $x_0 = x(t_r)$ for all $t_r$.

\newidea{Figure Analysis:} In Figure~\ref{fig:sim_data}, the data in red corresponds to $145$ separate simulation trials of controlling our unicycle system with the CBF-filtered controller~\eqref{eq:example_controller}, and the data in black corresponds to $15$ hardware demonstrations on the Robotarium. If the expert signal $s$ realizes positive robustness $\rho_{\psi}(s,0)$, then we can construct a time-varying CBF $h^s_{\psi}(x,t)$ given in equation~\eqref{eq:example_barrier_filter} to filter control inputs and ensure the system satisfies specification $\psi$~\eqref{eq:example_spec}.  Notice that for all trials, the expert signal robustness $\rho_{\psi}(s,0)$ is non-negative, and the corresponding CBF filter $h^s_{\psi}(x(t),t) \geq 0,~\forall~t \geq 0$ until the run is terminated.  As a result, we expect the system to always satisfy our specification $\psi$~\eqref{eq:example_spec} --- avoid obstacles and make non-trivial progress to a goal within $10$ seconds.  The corresponding data is portrayed at the bottom in Figure~\ref{fig:sim_data}.  Notice that the trajectory of the obstacle avoidance function $OA$~\eqref{eq:obstacle_avoidance}, when evaluated over time for our true system trajectory $x(t)$, is always positive, both in simulation and on hardware.  As a result, for every simulation and hardware trial, the system did not crash into any static or moving obstacles.  Furthermore, the path distance function trajectories indicate that even if the goals change --- leading to a momentary sharp jump in the path distance --- the system always takes steps to move towards the goal over a $10$ second period.  This eventually results in the system always reaching its goal as all paths converge to values between $0$ and $0.2$ (half the length of the side of a goal cell).  To note, two hardware trajectories timed out before the system could reach the goal cell.  However, the system still made non-trial progress toward its goal overall and would have reached the goal provided more time.  In summary, the data implies that not only can we synthesize barrier functions from one expert demonstration by leveraging the Lipschitz continuity of signal temporal logic robustness measures, but we can also steer systems using these barrier functions to ensure the satisfaction of the same signal temporal logic specification, both in simulation and on hardware.

\section{Conclusion and Future Work}
Through an inductive argument, we proved that every signal temporal logic specification has a Lipschitz continuous robustness measure with respect to a (semi)norm over the space of all signals.  Leveraging this information, we provided a constructive method to determine control barrier functions for arbitrary systems and showcased the ability of these synthesized functions to realize the specification. For future work, we would first like to extend our existence results to outline a constructive method to determine these Lipschitz-continuous robustness measures. This would facilitate control barrier function synthesis.  Second, as shown in the examples, repetitive signal generation and controller modification seem to generate reliable behavior.  As such, there might be a potential interface with Model Predictive Control that we would like to explore. Finally, we would like to account for uncertainties to robustify hardware implementation.
\balance
\bibliographystyle{IEEEtran}
\bibliography{IEEEabrv,bib_works}

\end{document}